\newcommand{\ud}{\,\mathrm{d}} 
\newtheorem{proposition}{Proposition}
\title{\bf{A Constrained Conditional Likelihood Approach for Estimating the Means of
Selected Populations}}
\author[1]{Claudio Fuentes}
\author[2]{Vik Gopal}
\affil[1]{Department of Statistics, Oregon State University}
\affil[2]{Department of Statistics and Applied Probability, National University of Singapore}
\date{}
\begin{document}
\maketitle

\begin{abstract}
%We consider the problem of estimating the mean of a subset of $p$ independent
%normal populations, where the selection procedure is based on observed data. 
%The selection process introduces a bias if we were to utilize the conventional
%sample means. There is currently a great deal of work on improving this
%post-selection inference, with several very neat and scalable results.

Given $p$ independent normal populations, we consider the problem of estimating
the mean of those populations, that based on the observed data, give the
strongest signals. We explicitly condition on the ranking of the sample means,
and consider a constrained conditional maximum likelihood (CCMLE) approach,
avoiding the use of any priors and of any sparsity requirement between the
population means.
Our results show that if the observed means are too close together, we should
in fact use the grand mean to estimate the mean of the population with the
larger sample mean. If they are separated by more than a certain threshold, we
should shrink the observed means towards each other. As intuition suggests,
it is only if the observed means are far apart that we should conclude that the
magnitude of separation and consequent ranking are not due to chance. 
%At higher dimensions, using numerical experiments, we demonstrate that these
%results carry over. We also show that if there are population subsets whose
%means are close to each other, then the CCMLE would perform better than the
%traditional estimator in terms of Mean Squared Error.
Unlike other methods, our approach does not need to pre-specify the number of selected populations and the proposed
CCMLE is able to perform simultaneous inference.
%While some other approaches require the determination and selection of $k$ (out
%of the $p$) populations before inference, our approach does not. 
%The proposed
%CCMLE performs simultaneous inference.
%\hl{it is not conditional on the
%selection of an unknown $k$ out of the $p$ populations for estimation}
%{\color{blue}Can we re-phrase this sentence? I think it could be misleading,
%because we are not considering the case where $k$ is random}. 
Our method, which
is conceptually straightforward, can be easily adapted to incorporate other
selection criteria.
\end{abstract}

%\begin{keyword}
%Selected populations, Maximum likelihood, Constrained MLE, Post-selection inference
%\end{keyword}

%\tableofcontents

\section{Introduction}
Consider a scenario where $p \ge 2$ independent normal populations are
available, and from each one of them, we obtain a sample of size $n$. In this
context, practitioners are sometimes interested in estimating the true means of
those populations that yielded the $k$ largest sample means in the experiment. 
%The largest sample mean is sometimes measured in terms of their absolute
%values.  Specifically, assuming that we obtain a single observation from each
%population, and that all populations have a common variance equal to one, we
%have $X_i \sim N(\mu_i, 1)$, $ i = 1, 2, \ldots, p$, and once we observe
%$X_1=x_1, X_2=x_2, \ldots, X_p=x_p$, we rank the observations and estimate
%$\mu_i$ for the $k$ largest $x_i$'s.  Formally speaking, we have observed
%$X_1=x_1, X_2=x_2, \ldots, X_p=x_p$, where \[ X_i \sim N(\mu_i, 1), \quad i =
%1, 2, \ldots, p \] We would then rank the observations and estimate $\mu_i$ for
%the $K$ largest (absolute) $x_i$'s.

A naive solution to the problem is to estimate the selected population means
with the corresponding sample means. Such an approach, however, is known to be
problematic.  \cite{putter1968estimating} showed that the resulting estimator is
biased for the case $k=1$. This bias is particularly evident when all $p$
populations are identically distributed. 
%where the problem of bias arises from the estimation of the common mean using an extreme observation.
In terms of optimality, \cite{stein1964minimax, sackrowitz1986evaluating} both
showed that the estimator is minimax only when $p=2$.
%Standard techniques to determine admissibility
%\cite{berger1976inadmissibility,brown1979heuristic, lele1993admissibility} are
%not straightforward, however discuss how to obtain a generalized Bayes
%estimator for this problem.

In order to improve on the naive estimator, several alternatives have been
proposed in the literature, including \cite{dahiya1974estimation},
\cite{cohen1982estimating}, and \cite{cohen1986estimating}. These papers
propose estimators that perform better in terms of the Mean Squared Error (MSE).
\cite{venter1988estimation} considered a bias correction approach for the
problem, obtaining estimators that perform well in terms of frequentist risk.
Following up on his own idea, \cite{venter1991estimation} introduced 
$\omega$-estimators which are essentially a weighted average of the order
statistics. \cite{cohen1989two} considered a two-stage procedure, assuming we
can obtain a second sample from the selected population, to produce unbiased
estimators of the selected means. Despite these results, performance theorems
are scarce, with the exception of  \cite{bolotskikh2016post} and 
\cite{hwang1993empirical}. The latter proposes an
empirical Bayes estimator and shows that it performs better in terms of the
Bayes risk with respect to any normal prior. The former paper 
focuses on admissibility and obtains a generalized Bayes estimator under a
harmonic prior.
% and \cite{sackrowitz1984estimation} who find UMVUE and minimax estimators of the mean of the selected population for the negative exponential distribution.

More recently, \cite{reid2014post} make an implicit assumption of sparsity,
that many effect sizes (absolute population means) $\theta_i=0$. By adapting 
the theory developed in \cite{lee2014exact}, the authors in \cite{reid2014post} 
perform post-selection inference with the Lasso. \cite{simon2013estimating}
approached the problem by estimating the first and second order bias of the
naive estimator. Their results, which are similar in performance to the
empirical Bayes approach in \cite{efron2011tweedie}, extend to the non-Gaussian
setting.
%This is a problem that has been well-studied in recent times, and has given rise
%to very clever work in \cite{reid2014post}, \cite{efron2011tweedie} and
%\cite{fuentes2013interval}. These results are discussed in the next section.

In this paper we propose a new estimator, that is based entirely on the
likelihood function after incorporating the selection process.  We motivate and
define this new estimator in section \ref{sec:CCMLE_def}, where we also provide
a neat result for $p=2$ that yields some insight into how the estimator works.
In section \ref{sec:CCMLE_comp}, we discuss the main computational hurdle in
computing this estimator, and provide some direction on overcoming it. Following
that section, we summarize the results of a simulation study that highlights the
benefits and flaws of this new estimator. We conclude the paper with a brief
discussion of our main results.

\section{Conditional Likelihood Estimation}
\label{sec:CCMLE_def}

\subsection{Defining the Conditional Likelihood}

For simplicity, suppose that we obtain a single observation $X_i \sim
N(\mu_i, \sigma^2)$ from each population, and that the common variance
$\sigma^2=1$. Then, for $\bm{\mu} = (\mu_1, \ldots, \mu_p)^T$, the unconditional
likelihood is given by $L_0(\bm{\mu}) = \prod_{i=i}^p \phi(x_i - \mu_i)$,
where $\phi$ denotes the density function of the standard normal distribution.
However, once we observe $X_1=x_1,\;X_2=x_2,\;\ldots,\;X_p=x_p$, we rank the
observations in order to identify the populations corresponding to the largest
$x_i$'s and estimate the respective means $\mu_i$.
%, when we select a certain population
%for estimation, it is typically based on its observed sample mean --
%intuitively, a sample with a larger mean suggests that its underlying mean is
%larger. Thus, we are in fact using the information that we have regarding the
%ranking of the sample means.  Hence, the \emph{joint conditional} pdf which is
%relevant is 
%\begin{equation}
%\label{eq:cond_joint}
%f(x_1, x_2, \ldots, x_p | x_1 > x_2 \ldots > x_p) = 
%\frac{\prod_{i=i}^p \phi(x_i - \mu_i)}{P(X_1 > X_2 \ldots > X_p) }
%\end{equation}
Hence, the conditional likelihood of interest is 
\begin{equation}
\label{eq:cond_lik}
L(\bm{\mu}) = 
\frac{\prod_{i=i}^p \phi(x_i - \mu_i)}{P_{\bm{\mu}}(X_1 > X_2 \ldots > X_p)},
\end{equation}
where the notation $P_{\bm{\mu}}(\cdot)$ explicitly states that the probability
under consideration depends on $\bm{\mu}$. Note that, for equation
(\ref{eq:cond_lik}), we do not have to worry about the labels attached to the
groups. In other words, there is no loss of generality in assuming that the
ordering of the means is $X_1 > X_2 > \ldots > X_p$. It follows from equation
(\ref{eq:cond_lik}) that the log of the conditional likelihood is

%Bear in mind that $P(X_1 > X_2 > \cdots > X_p)$ depends on $\bm{\mu}$. In an attempt
%to make our notation less clunky, we shall avoid explicitly displaying this
%dependence until we really need to.

%The use of a conditional likelihood is not at all radical. In fact, the method
%described in section \ref{subsec:reid_tib} uses a conditional likelihood based
%on truncated Gaussians for the density of the $X_i$'s. The difference between
%that approach and the one directed by equation (\ref{eq:cond_lik}) is in the
%event being conditioned on. In the former approach, the inventors condition on
%having selected the K largest (absolute) order statistics to represent the
%non-zero effects. In this paper, we shall condition on the probability that the
%rankings turned out in a particular arrangement. 
%The use of this particular probability for conditioning is similar to the
%approach in \cite{zollner2007overcoming}, where they condition on the
%statistical significance of a genetic marker. In other words, they condition on
%the event that the $p$-value computed from observed data is less than a
%pre-specified $\alpha$-level.
%
\begin{equation}
\label{eq:cond_loglik}
l(\bm{\mu}) =  C - \frac{1}{2} \sum_{i=1}^p (x_i - \mu_i)^2 - 
\log{P_{\bm{\mu}}(X_1 > X_2 \ldots > X_p) }
\end{equation}

 \subsection{Constrained Conditional MLE}
\label{subsec:unb_cl}

For the case $p=2$, note that $\bm{\mu} = (\mu_1, \mu_2)$ and 
%$X_1 \sim N(\mu_1, 1)$ and $X_2 \sim N(\mu_2, 1)$. Due
%to independence, it follows that $X_1 - X_2 \sim N(\mu_1 - \mu_2, 2)$. The
%resulting probability in the conditional likelihood is 
%\begin{eqnarray*}
%P(X_1 > X_2) &=& P(X_1 - X_2 > 0) \\
%&=& 1 - \Phi \left( \frac{0 - (\mu_1 - \mu_2)}{\sqrt{2}} \right) 
%= 1 - \Phi \left( \frac{\mu_2 - \mu_1}{\sqrt{2}} \right) 
%\end{eqnarray*}
\begin{equation}
\label{eq:p_2_prob}
P_{\bm{\mu}}(X_1 > X_2) = P_{\bm{\mu}}(X_1 - X_2 > 0) = 1 - \Phi \left( \frac{\mu_2 - \mu_1}{\sqrt{2}} \right)
\end{equation}
For a fixed $\mu_2$, $P_{\bm{\mu}}(X_1 > X_2)$ approaches 0 as $\mu_1\downarrow-\infty$.
This means that the $-\log P_{\bm{\mu}}(X_1 > X_2)$ term in equation (\ref{eq:cond_loglik})
increases to $\infty$, causing the conditional likelihood to be unbounded.  In
other words, there is no global maximum for the expression in 
(\ref{eq:cond_loglik}) for $-\infty < \mu_1, \mu_2 < \infty$. This phenomenon is
of course, not specific to $p=2$.

The occurrence of an unbounded likelihood is not without precedent in the
statistics literature. Possibly the most widely studied models in which this
occurs is a normal mixture model with unequal variances
\cite{hathaway1985constrained}. One of the solutions in that model was to constrain
the parameter space to where there are local modes, and that is what we shall
attempt to do here as well.

Since we expect that the populations with the largest sample means would be
the ones with the largest population means, we now aim to maximise the
conditional likelihood in equation (\ref{eq:cond_loglik}) subject to 
$\bm{\mu} \in \Theta$, where $\Theta = \{ \bm{\mu} : \mu_1 \ge \mu_2 \ge 
\mu_3 \ldots \ge \mu_p \}$.
We thus define the Constrained Conditional Maximum Likelihood Estimator (CCMLE) to be 
\begin{equation}
\label{eq:ccmle_def}
\hat{\bm{\mu}} = \underset{\bm{\mu} \in \Theta}{\mbox{arg max}}\; 
l(\bm{\mu})
\end{equation}

This is clearly a case of constrained statistical inference. Much of the theory
regarding this approach can be found in  \cite{silvapulle2011constrained}.
%\cite{boos2013essential} and \cite{self1987asymptotic}.

%%\section{Computation of CCMLE}
%\hl{I think this should be a subsection of Section 2 (We are already considering the normal case)}\\
%In this section, we discuss one theoretical result, and some empirical
%properties of the CCMLE. Until now, we have conveniently assumed that $\sigma^2$
%is known, and that it is equal to 1. In reality this is seldom the case. Thus we
%return to a greater generality of the problem in this section. In sections
%\ref{subsec:CCMLE_p2} and \ref{subsec:CCMLE_pg2}, we discuss the situation when
%we know the true value of $\sigma^2$. In section \ref{subsec:CCMLE_UV}, we
%discuss the unknown variance case.
%
\subsection{A Closed Form Result}
\label{subsec:CCMLE_p2}

%Recall that the inverse Mills Ratio, $g(\cdot)$, is given by
%\begin{equation}
%\label{eq:imr}
%g(x) = \frac{\phi(x)}{1 - \Phi(x)}
%\end{equation}
%and satisfies the following properties: $i)$ $g(x)$ is convex and monotone increasing \cite[see][]{baricz2008mills}; $ii)$ $g(0)= \sqrt{\frac{2}{\pi}} $; and $iii)$ $g'(0) = 2 / \pi$.
%
%\begin{enumerate}[i)]
%\item $g(x)$ is convex and monotone increasing \cite[see][]{baricz2008mills}. 
%%The curve is plotted for future reference in Figure \ref{fig:imr}.
%%\begin{figure}[t]
%%\centering
%%\includegraphics[width=0.4\textwidth]{CMLE2-imr_1}
%%\caption{Plot of Inverse Mills Ratio}
%%\label{fig:imr}
%%\end{figure}
%
%\item $g(0)= \sqrt{\frac{2}{\pi}} $
%%Note that 
%%\begin{eqnarray*}
%%g(0) &=& \frac{1}{\sqrt{2 \pi}} \times \frac{e^0}{1 - 0.5} \\
%%&=& \frac{1}{\sqrt{2 \pi}} \times 2 \\
%%&=& \sqrt{ \frac{2}{\pi} } \approx 0.7979
%%\end{eqnarray*}
%
%\item $g'(0) = 2 / \pi$.
%%It also holds that $g'(0) = 2/\pi \approx 0.6366$. In order to see this,
%%first note that the derivative of the standard normal pdf is 
%%\begin{equation}
%%\label{fig:der_norm_pdf}
%%\phi'(x) = -x \phi(x)
%%\end{equation}
%%Hence the derivative of $g$ is 
%%\begin{eqnarray*}
%%g'(x) &= \frac{(1 - \Phi(x))( -x \phi(x)) - 
%%\phi(x) (-\phi(x)) }{( 1 - \Phi(x))^2} \\ 
%% &= \frac{-x \phi(x) (1 - \Phi(x)) + (\phi(x))^2 }{( 1 - \Phi(x))^2} \\  
%%\end{eqnarray*}
%%Therefore $g'(0) = (\phi(0))^2 / 0.25 = 4 / (2 \pi) = 2 / \pi$.
%\end{enumerate}
\begin{proposition}
\label{thm:p_2}
Consider the CCMLE when $p=2$ and the variance is known. Thus, we have
realisations $x_1$ and $x_2$ from $N(\mu_i,\; \sigma^2)$ for $i=1,2$, and have
observed that $x_1 > x_2$. If $x_1 - x_2 > \frac{2 \sigma}{\sqrt{\pi}}$, then
the unique CCMLE is an interior point of $\Theta = \{(\mu_1, \mu_2): \mu_1 \ge
\mu_2 \}$. If $x_1 - x_2 \le \frac{2 \sigma}{\sqrt{\pi}}$, the CCMLE is
$\hat{\bm{\mu}} = (\bar{x}, \bar{x})$, where $\bar{x} = (x_1 + x_2)/2$.
\end{proposition}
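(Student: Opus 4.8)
The plan is to reduce the two-dimensional optimisation to a single scalar problem through a linear change of variables, and then to study a one-dimensional concave function. Writing the objective for general $\sigma$, the penalty term is $-\log\Phi\!\left((\mu_1-\mu_2)/(\sqrt2\sigma)\right)$, since $X_1-X_2\sim N(\mu_1-\mu_2,\,2\sigma^2)$ gives $P_{\bm\mu}(X_1>X_2)=\Phi\!\left((\mu_1-\mu_2)/(\sqrt2\sigma)\right)$. First I would set $s=\mu_1+\mu_2$, $d=\mu_1-\mu_2$, with $S=x_1+x_2$, $D=x_1-x_2$, and use the identity $(x_1-\mu_1)^2+(x_2-\mu_2)^2=\tfrac12\big[(S-s)^2+(D-d)^2\big]$ to write
\begin{equation*}
l = C - \frac{1}{4\sigma^2}(S-s)^2 - \frac{1}{4\sigma^2}(D-d)^2 - \log\Phi\!\left(\frac{d}{\sqrt2\sigma}\right).
\end{equation*}
The constraint $\mu_1\ge\mu_2$ becomes simply $d\ge 0$, with $s$ unconstrained, and since the transformation is an invertible linear map, maximising $l$ over $\Theta$ is equivalent to maximising the right-hand side over $\{d\ge 0\}$.

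The objective separates, so the $s$-part is maximised at $s=S$ irrespective of $d$; this immediately yields $\mu_1+\mu_2=x_1+x_2$, i.e.\ the estimator preserves the grand mean. It then remains to maximise
\begin{equation*}
f(d) = -\frac{1}{4\sigma^2}(D-d)^2 - \log\Phi\!\left(\frac{d}{\sqrt2\sigma}\right)
\end{equation*}
over $d\ge 0$. Differentiating, and writing $t=d/(\sqrt2\sigma)$ and $m(t)=\phi(t)/\Phi(t)$, the stationarity condition is $D-d=\sqrt2\,\sigma\, m(t)$.

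The crux is to establish that $f$ is strictly concave on $d\ge 0$, for then the sign of $f'(0)$ alone determines whether the maximiser is interior or at the boundary, and strict concavity also delivers uniqueness of the maximiser on the convex set $\{d\ge 0\}$. A direct computation gives $f''(d)=-\tfrac{1}{2\sigma^2}\big(1+m'(t)\big)$, with $m'(t)=-m(t)\big(t+m(t)\big)$, so strict concavity reduces to the inequality $m'(t)>-1$. This is the classical fact that the standard normal hazard function has derivative in $(0,1)$; by reflection the reversed hazard $m$ has derivative in $(-1,0)$, which is exactly log-concavity of $\Phi$ together with the sub-unit bound on its curvature. I expect verifying this Mills-ratio inequality to be the main technical obstacle, though it is standard and can be invoked as a known property of the Gaussian.

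Granting strict concavity, I would finish by locating the maximiser via the boundary derivative. Using $m(0)=\phi(0)/\Phi(0)=\sqrt{2/\pi}$, one obtains
\begin{equation*}
f'(0) = \frac{1}{2\sigma^2}\left(D - \frac{2\sigma}{\sqrt\pi}\right).
\end{equation*}
If $x_1-x_2=D>2\sigma/\sqrt\pi$ then $f'(0)>0$, and since $f'$ is strictly decreasing with $f'(d)\to-\infty$, the unique maximiser is at some $d^\ast>0$, an interior point of $\Theta$. If $D\le 2\sigma/\sqrt\pi$ then $f'(0)\le 0$, so $f$ is decreasing on $d\ge 0$ and the maximiser is $d=0$; combined with $s=S$ this gives $\mu_1=\mu_2=\bar x$, the claimed $(\bar x,\bar x)$. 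The threshold $2\sigma/\sqrt\pi$ is thus precisely the value at which $f'(0)=0$.
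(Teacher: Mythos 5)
Your proof is correct, and it takes a genuinely different route from the paper's. The paper works in the original coordinates: it sets the two partial derivatives to zero, reduces to a transcendental equation $g\bigl(\sqrt{2/\sigma^2}(\bar x-\mu_1)\bigr)=\sqrt{2/\sigma^2}(x_1-\mu_1)$ in $\mu_1$ alone, and argues about the intersection of the two curves using convexity and monotonicity of the inverse Mills ratio $g$ together with the slope value $g'(0)=2/\pi$; the boundary case then requires a separate, somewhat delicate argument (bounding $l$ above by a quadratic to rule out escape to infinity, concluding the maximum lies on $\{\mu_1=\mu_2\}$, and then a directional-derivative computation along that edge to land on $(\bar x,\bar x)$). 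Your rotation to $(s,d)=(\mu_1+\mu_2,\mu_1-\mu_2)$ decouples the problem into a trivial quadratic in $s$ (immediately giving $\hat\mu_1+\hat\mu_2=x_1+x_2$, which the paper only obtains implicitly by adding the two score equations, and which explains at once why the boundary solution must be $(\bar x,\bar x)$) and a one-dimensional strictly concave problem in $d\ge 0$, so that the entire dichotomy, including the equality case $x_1-x_2=2\sigma/\sqrt\pi$, falls out of the sign of $f'(0)$. The key lemma you need, $-1<m'(t)<0$, is equivalent to the standard bound $0<g'<1$ on the derivative of the normal hazard and is available in the same Mills-ratio reference the paper cites, so it is no heavier an input than what the paper uses. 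What your approach buys is uniformity (no case split between interior and boundary analyses) and a clean structural byproduct (preservation of the grand mean); what the paper's approach buys is a geometric picture of the interior root and a directional-derivative technique on the boundary that is the natural candidate for attacking the faces of $\Theta$ when $p>2$, where no separating linear change of variables is available. The one step you should make explicit rather than merely invoke is the verification that $m'(t)>-1$ for all $t$, since the global strict concavity of $f$ carries the whole argument.
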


%\begin{equation}
%\label{eq:imr}
%g(x) = \frac{\phi(x)}{1 - \Phi(x)}
%\end{equation}

\begin{proof}
Let us first define $g(\cdot)$ to denote the inverse Mills ratio:
\begin{equation}
\label{eq:mills_1}
g(x) = \frac{\phi(x)}{1 - \Phi(x)}
\end{equation}
We shall use certain properties of $g$ in this proof -- in particular, note that 
$g$ is convex and monotone increasing \cite[see][]{baricz2008mills}.

Recall that the log-likelihood function is given by
\[
l(\bm{\mu}) = -\log 2\pi - \log \sigma^2 - \frac{1}{2\sigma^2} 
\sum_{i=1}^2 (x_i - \mu_i)^2 - 
\log{\left( 1 - \Phi \left(\frac{\mu_2 - \mu_1}{\sqrt{2 \sigma^2}} \right) \right)}
\]
%Setting the first partial derivatives to zero yields :
%\begin{eqnarray}
%\label{eq:partial_01}
%0 = \frac{\partial l}{\partial \mu_1}  &= \frac{1}{\sigma^2} (x_1 - \mu_1)  - 
%\frac{\frac{1}{\sqrt{2 \sigma^2}} \phi \left(\frac{\mu_2 - \mu_1}{\sqrt{2
%\sigma^2}} \right) } {1 - \Phi \left(\frac{\mu_2 - \mu_1}{\sqrt{2 \sigma^2}} \right) }   \\
%\label{eq:partial_02}
%0 = \frac{\partial l}{\partial \mu_2}  &= \frac{1}{\sigma^2} (x_2 - \mu_2)  +
%\frac{\frac{1}{\sqrt{2\sigma^2}} \phi \left(\frac{\mu_2 - \mu_1}{\sqrt{2\sigma^2}} \right) }
%{1 - \Phi \left(\frac{\mu_2 - \mu_1}{\sqrt{2 \sigma^2}} \right) } 
%\end{eqnarray}
%
%Adding these two equations, we can see that at a stationary point, we must have
%that 
%\begin{equation}
%\label{eq:sum_mu_hat}
%\mu_1 + \mu_2 = x_1 + x_2 = 2 \bar{x}
%\end{equation}
%A little further algebraic manipulation of the above informs us that 

Setting the partial derivatives to zero, we find that, if a solution exists, it
must satisfy
%\begin{equation}
%\label{eq:for_sub}
%\mu_2  - \mu_1 = 2 ( \bar{x} - \mu_1)
%\end{equation}
%Substituting (\ref{eq:for_sub}) into (\ref{eq:partial_01}), we can see that
%(refer to equation (\ref{eq:imr}))
\begin{equation}\label{eq:sol}
g \left( \sqrt{\frac{2}{\sigma^2}} (\bar{x} - \mu_1) \right) = 
\sqrt{\frac{2}{\sigma^2}} ( x_1 - \mu_1 )
\end{equation}
Observe that equation (\ref{eq:sol}) depends only on $\mu_1$. Let us define the
function on the left to be $h_1(\cdot)$ and the function on the right to be
$h_2(\cdot)$. Thus, a stationary point exists if $h_1$ and $h_2$ intersect. Note
that we are finding the root of a transcendental equation; no closed form
solution exists. A plot of these two functions can be seen in Figure
\ref{fig:noroot_1}. Several quantities on the plot can be derived. For instance,
from equation (\ref{eq:mills_1}), we can compute that $h_1(\bar{x}) = g(0) =
\sqrt{\frac{2}{\pi}}$ and $ h'_1(\bar{x}) = -g'(0) \sqrt{2 /\sigma^2} \approx
-0.900/ \sqrt{\sigma^2}$.  Moreover, the height $H$ is $(x_1 - x_2)/\sqrt{2
\sigma^2}$.
 
\begin{figure}[H]
\centering
\includegraphics[width=0.6\textwidth]{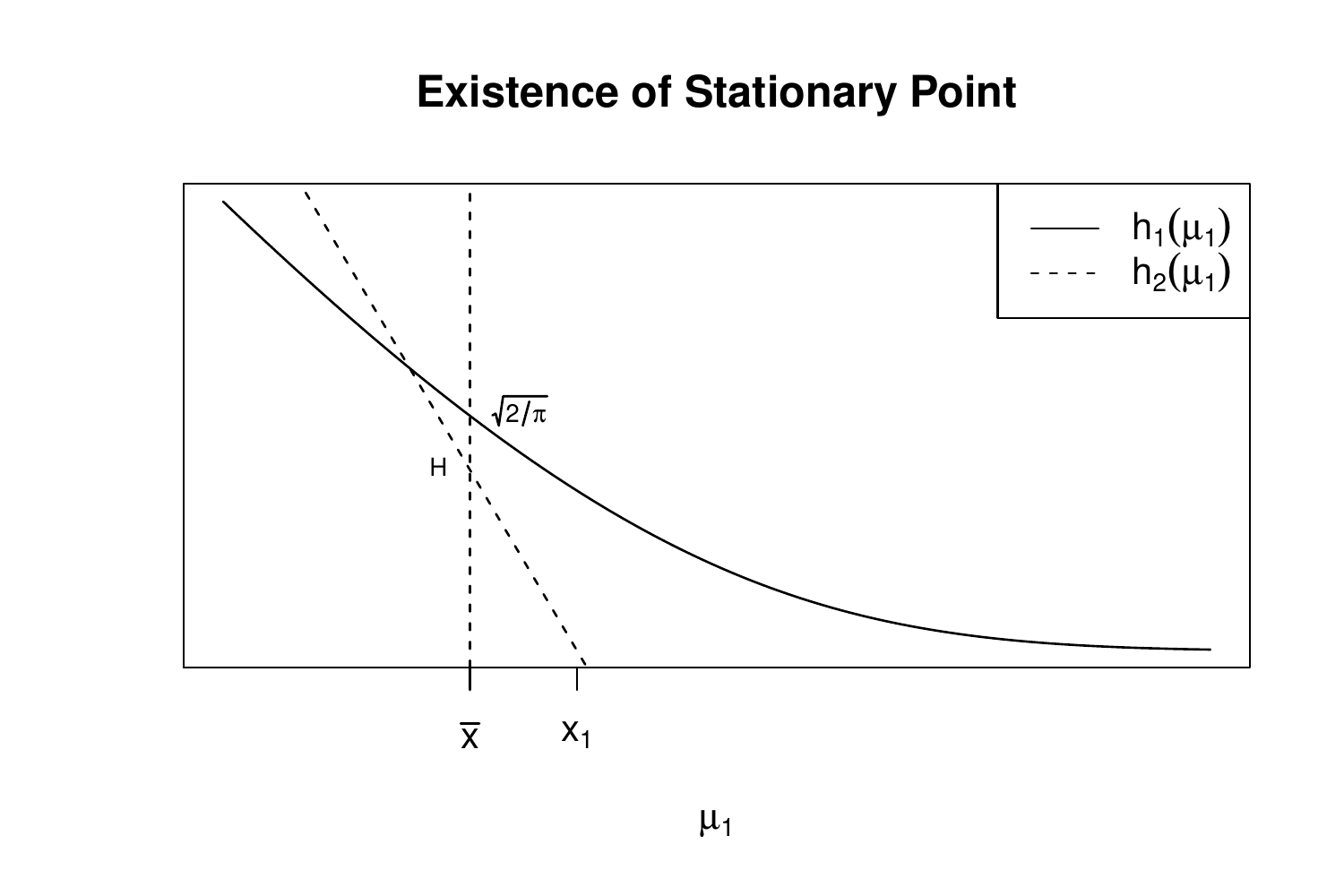}
\caption{Conditions for existence/absence of a stationary point.}
\label{fig:noroot_1}
\end{figure}

The figure shows that an intersection exists in $\{ \mu_1: \mu_1
\ge \bar{x} \}$ if and only if $H > \sqrt{2/\pi}$. In other words, 
if and only if $x_1 - x_2 > 2 \sigma/\sqrt{\pi}$.

Suppose that there is more than 1 root in $\{ \mu_1: \mu_1 \ge \bar{x} \}$.
Then, at some point, it must be that $h'_1(\mu_1)< - \sqrt{2 / \sigma^2}$.
However $h'_1(0) \approx -0.900 / \sqrt{\sigma^2}$, and $h_1$ is convex.
Hence $h'_1(\mu_1) > -0.900 / \sqrt{\sigma^2} $ for all $\mu_1 \ge
\bar{x}$. It follows from this contradiction that the stationary point (if it
exists) is unique.

Our final task is to show that, when $x_1 - x_2 \le 2 \sigma/\sqrt{\pi}$,  the
CCMLE is on the boundary of $\Theta$ at $(\bar{x}, \bar{x})$. Within the
constrained parameter space $\Theta$, we have $\mu_2 \le \mu_1$ and therefore 
%\begin{eqnarray*}
%\frac{\mu_2 - \mu_1}{\sqrt{2 \sigma^2}} &\le& 0 \\
%\Phi \left( \frac{\mu_2 - \mu_1}{\sqrt{2 \sigma^2}} \right) &\le& 0.5  \\
%1 \ge 1 - \Phi \left( \frac{\mu_2 - \mu_1}{\sqrt{2 \sigma^2}} \right) &\ge& 1 - 0.5  =
%0.5
%\end{eqnarray*}
\begin{equation*}
1 - \Phi \left( \frac{\mu_2 - \mu_1}{\sqrt{2 \sigma^2}} \right) \ge 0.5
\end{equation*}

Hence, for all $\bm{\mu} \in \Theta$, 
%\[
%- \log \left[ 1 - \Phi \left( \frac{\mu_2 - \mu_1}{\sqrt{2 \sigma^2}} \right) \right] 
%\le - \log 0.5 = \log 2
%\]
%The implication is then that, for all $\bm{\mu} \in \Theta$, 
\[
l(\bm{\mu}) \le C - \frac{1}{2 \sigma^2} \sum_{i=1}^2 (x_i - \mu_i)^2 + \log 2
\]
It is then easy to see that as $\mu_1$ and/or $\mu_2$ go to $+\infty$ and/or
$-\infty$, it holds that $l(\bm{\mu}) \rightarrow -\infty$.

When $x_1 - x_2 < 2 \sigma/\sqrt{\pi}$, we already know that there are no stationary
points in $\Theta$. This information, coupled with the understanding that
$l(\bm{\mu})$ decreases without bound as $\bm{\mu}$ moves away from the
boundary, leads to the conclusion that the CCMLE must be on the boundary $\{
\bm{\mu}: \mu_1 = \mu_2 \}$.

Now consider any point $\bm{\mu}$ such that $\mu_1 = \mu_2$. If we can show that
the directional derivative in the direction of $(\bar{x}, \bar{x})$ is always
positive, we are done, because it implies that we can always increase the
log-likelihood by taking a suitable step in the direction of $(\bar{x},
\bar{x})$. To simplify notation, let  
\[
\gamma = \frac{1}{\sqrt{2 \sigma^2}} g \left( \frac{\mu_2 - \mu_1} {\sqrt{2
\sigma^2}} \right)
\]
Thus we can write the gradient of $l$ as $ \nabla l = 
\left( \frac{1}{\sigma^2}(x_1 - \mu_1) - \gamma,\; \frac{1}{\sigma^2}
(x_2 - \mu_2) + \gamma \right)^T$. The direction we need to consider is 
$ \bm{u} = K(\bar{x} - \mu_1,\; \bar{x} - \mu_2)^T$,
where $K$ is a positive normalising constant, that makes $|\bm{u}| = 1$. 

For any point in $\Theta$, the directional derivative is
\begin{equation}
\label{eq:dd_1}
\nabla l^T \bm{u} \propto ((1/\sigma^2)(x_1 - \mu_1) - \gamma)(\bar{x} - \mu_1) + 
((1/\sigma^2)(x_2 - \mu_2) + \gamma)(\bar{x} - \mu_2) 
%&=& \frac{1}{\sigma^2}(x_1 - \mu_1)(\bar{x} - \mu_1) - \gamma(\bar{x} - \mu_1) + 
% \frac{1}{\sigma^2}(x_2 - \mu_2)(\bar{x} - \mu_2) + \gamma(\bar{x} - \mu_2)  \\
%&=& \frac{1}{\sigma^2}(x_1 - \mu_1)(\bar{x} - \mu_1) + \frac{1}{\sigma^2}(x_2 -
%\mu_2)(\bar{x} - \mu_2) + \gamma(-\bar{x} + \mu_1 + \bar{x} - \mu_2)  \\
\end{equation}
Now for any point on the edge of $\Theta$, we let $\mu = \mu_1 = \mu_2$ and
simplify expression (\ref{eq:dd_1}) to show that 
\begin{equation*}
\nabla l^T \bm{u} \propto \frac{2}{\sigma^2}(\bar{x} - \mu)^2 \ge 0
\end{equation*}
\end{proof}

\subsection{Sample Computations}
\label{subsec:CCMLE_examples}

It is not straightforward to generalize the proof in Proposition \ref{thm:p_2} to
higher dimensions. However, the same intuitive results and numerical approaches apply. 

In Table \ref{tab:sample_ccmle} below, we present some sample CCMLE
computations for various combinations of observed values when $p=4$. In
performing the computations, we assume that $\sigma^2$ is known, and is equal to 1.
The purpose is to underline how, whenever a subset of observed sample means are
close together, the CCMLE procedure will shrink them towards each other. This is
apparent in row 1 of the table. Unlike the $p=2$ case, however, the critical
distance at which they are collapsed onto one another is no longer
$2/\sqrt{\pi}$. This latter phenomenon can be observed in row 2, where the
separation is less than $2/\sqrt{\pi}$, but the estimated values are not exactly
equal.

\begin{table}[H]
\centering
{\small
%\begin{tabular}{l|c|c|c|c|c|c|c|c|}
\begin{tabular}{|l|cccc|cccc|}
\hline
& \multicolumn{4}{c|}{\textbf{Observed Values}} &
\multicolumn{4}{c|}{\textbf{Estimates}} \\
%\hline
& $\bm{x_1}$ & $\bm{x_2}$ & $\bm{x_3}$ & $\bm{x_4}$ & $\bm{\hat{\mu}_1}$ &
$\bm{\hat{\mu}_2}$ & $\bm{\hat{\mu}_3}$ & $\bm{\hat{\mu}_4}$  \\
\hline
Config 1 & 10.0 & 9.5 & 9.0 & 0.0 & 9.50 & 9.50 & 9.50 & 0.00  \\ 
%\hline
Config 2 & 10.0 & 9.0 & 8.0 & 0.0 & 9.35 & 9.00 & 8.65 & 0.00  \\ 
%\hline
Config 3 & 10.0 & 9.0 & 1.0 & 0.0 & 9.50 & 9.50 & 0.50 & 0.50  \\ 
%\hline
Config 4 & 10.0 & 2.0 & 1.0 & 0.0 & 10.00 & 1.35 & 1.00 & 0.65  \\ 
\hline
\end{tabular}
}
\caption{Sample CCMLE computations for $p=4$}
\label{tab:sample_ccmle}
\end{table}

%\subsubsection{Some Empirical Runs}
%
%Here is some empirical runs using our code, which just carries out numerical
%optimisation, without using Theorem 1, i.e. without checking the separation of
%$X_1$ and $X_2$.
%
%<<label=2_pi>>=
%round(CMLE(c(1.130, 0))$par, 3)
%round(CMLE(c(1.128379, 0))$par, 3)
%round(CMLE(c(1.110, 0))$par, 3)
%@
%
%Now we can start thinking about the null distribution of the LRT statistic,
%using an approach similar to Example 3.3.1 in \cite{silvapulle2011constrained}.
%

\section{Computation of the CCMLE}
\label{sec:CCMLE_comp}

\subsection{Calculation of Probabilities}
As part of numerically maximising the conditional likelihood in equation
(\ref{eq:cond_loglik}), we need to repeatedly compute $P_{\bm{\mu}}(X_1 > X_2 \ldots
> X_p)$ for different $\bm{\mu}$ vectors.  In this section, we outline how it is
possible to compute this integral for small values of $p$ using a trick of
conditioning, followed by a nested application of the \texttt{integrate}
function in R.

%\subsection{$p=2$} This is the simplest case. As explained in Section
%\ref{subsec:unb_cl}, it can
For the case $p=2$, it is clear from equation (\ref{eq:p_2_prob}) that the
probability of interest can be computed using the usual approximations to the
standard normal distribution function.

For the case $p=3$, we can condition on the random variable in the middle to
yield a univariate integral.
\begin{eqnarray*}
P_{\bm{\mu}}(X_1 > X_2 > X_3) &=& \int P_{\bm{\mu}}(X_1 > X_2 > X_3 | X_2 = x_2) f(x_2) \ud x_2 \\
%&=& \int P(X_1 > x_2 > X_3) f(x_2) \ud x_2 \\
&=& \int P_{\bm{\mu}}(X_1 > x_2) P(x_2 > X_3) f(x_2) \ud x_2
\end{eqnarray*}

The same approach enabled us to compute the probabilities up to $p=7$ without
any further optimisation in R. For higher dimensions, we advocate computing the
integral using a lower level language such as C, and switching to a sparse grid
method \cite{heiss2008likelihood} instead of persisting with cubature
techniques.

%A similar approach is used for dimensions up to $p=7$. Note that everything is
%still being done within R, so it can be greatly sped up just by using $C$.
%Higher than $p=7$, however, we should start looking at sparse grids, as in
%\cite{heiss2008likelihood}.\hl{What we can to say about this?}
%

\subsection{Obtaining A Good Starting Point}
\label{subsec:CCMLE_pg2}

%<<label=p_3_1, echo=TRUE>>=
%round(CMLE(c(6, 5, 0))$par, 3)
%@
%We find that $\hat{\mu}_1 = \hat{\mu}_2$ because the observed means are close to
%one another; they are consequently shrunk towards each other. On the other hand,
%$\hat{\mu}_3$ is unchanged from the unconditional MLE, since the observed value
%is so far from $X_1$ and $X_2$. This is a case when the CCMLE is on the boundary
%of the parameter space 
%\[
%\Theta = \{ \bm{\mu}: \mu_1 \ge \mu_2 \ge \mu_3 \}
%\]
%
%Consider another situation, where $X_1 = 4, X_2 = 2, X_3 = 0$:
%<<label=p_3_3, echo=TRUE>>=
%round(CMLE(c(4, 2, 0))$par, 3)
%@
%There is sufficient distance between the observed values, so that none of the
%CCMLE values are equal (not shrunk all the way to equality). Moreover, the CCMLE
%is found on the interior of $\Theta$. Another property we have observed (and
%which can be proved when a stationary point exists) is that the CCMLE is
%analagous to the result in equation (\ref{eq:sum_mu_hat}).
%\[
%\sum_{i=1}^p \hat{\mu}_i = \sum_{i=1}^p x_i
%\]

We now focus on obtaining a good starting point for the numerical optimisation
in the case that $p > 2$. A good starting point ensures that we can reduce the
number of times that we evaluate the high-dimensional probability and its
derivatives. Let us first denote $f(\bm{\mu}) = 
\log{P_{\bm{\mu}}(X_1 > X_2 \ldots > X_p) }$. Taking a first order Taylor
approximation of $f$ about the observed $\bm{x}$, we can approximate the
conditional likelihood in (\ref{eq:cond_loglik}) with 
\begin{equation}
\label{eq:cond_loglik_taylor}
l(\bm{\mu}) \approx  C - \frac{1}{2}  (\bm{x} - \bm{\mu})^T (\bm{x} - \bm{\mu}) - 
f(\bm{x}) - (\bm{\mu} - \bm{x})^T \nabla f(\bm{x})
\end{equation}

Taking derivative with respect to $\bm{\mu}$ and setting the above equation to
0, we can get an approximation to the CCMLE. It works out to be 
\begin{equation}
\label{eq:taylor_est}
\hat{\bm{\mu}}_0 \approx  \bm{x} - \nabla f(\bm{x})
\end{equation}

The solution does not mean that a stationary point always exists - remember that
we are solving an approximation to the conditional log-likelihood. For the same
reason, it is also possible that the solution in equation (\ref{eq:taylor_est})
does not fall within the constrained space $\Theta$. In such cases, we shall use
the orthogonal projection onto $\Theta$ as the starting point. Note that the
projection has to be performed numerically - fortunately however, $\Theta$ is
convex, and hence we can rely on prior methods from convex optimization theory
in order to perform this step. The R package \cite{ravi2009BB} provides a good
implementation for solving this projection problem, using a quadratic
programming technique.

\section{A Comparison By Simulation}

In this section, we shall assess the performance of the CCMLE via its Mean
Squared Error, and the bootstrap confidence intervals constructed using it. We
take the opportunity to highlight that the errors and intervals have to respect
the selection procedure.

For instance, suppose that populations A, B and C have true means 3, 2 and 1
respectively. If the corresponding sample means are 2.1, 2.2 and 1.8, then the
population selected as the maximum would be population B. The error in
estimating the mean of the selected population using the sample mean would be 
\[
	2 - 2.2 = -0.2
\]
This is the methodology employed in Section \ref{subsec:mse}. The error should
not be computed as $3 - 2.2 = +0.8$, since the true mean of the selected
population is in fact 2.

Similarly, when bootstrapping the strata in Section \ref{subsec:boot}, the
population selected to have the maximum will not always be population B. It
could be population A or even population C depending on the bootstrap sample
drawn.  Thus it is not accurate to describe it as a confidence interval for
population A (which has the largest mean). It is a confidence interval for
the population \emph{selected} to have the maximum mean.

\subsection{Mean Squared Error Comparison}
\label{subsec:mse}
In this subsection, we conduct a simulation study to understand the MSE of the
CCMLE, as compared to the MSE of the ordinary MLE. We consider only the cases
when $p=2$ and $p=3$ as they are sufficiently informative. 

For the $p=2$ case, we fix $\mu_2 = 0$, and vary $\mu_1$ from 0 to 5. For each
configuration, we generate 1000 bivariate $N(\bm{\mu}, \bm{I})$ random vectors,
and then estimate the mean of the population with the larger sample mean.

The MSE estimate for each configuration has been plotted and smoothed in Figure
\ref{fig:plot_p2}. Notice that the CCMLE performs very well when $\mu_1 - \mu_2
\le 1.5$. Beyond that, it performs approximately 10\% worse than the
unconditional MLE until the difference in means becomes quite large. From such
a point onwards, it will be the case that the sample means will be far enough
apart to warrant no shrinkage at all. It seems reasonable to guess that when
populations are close together, the CCMLE will be very beneficial; however, when
the population means are in fact far apart, it is not the right estimator to
use. We shall witness this in the $p=3$ case as well.

\begin{figure}[H]
\centering
\includegraphics[width=0.7\textwidth]{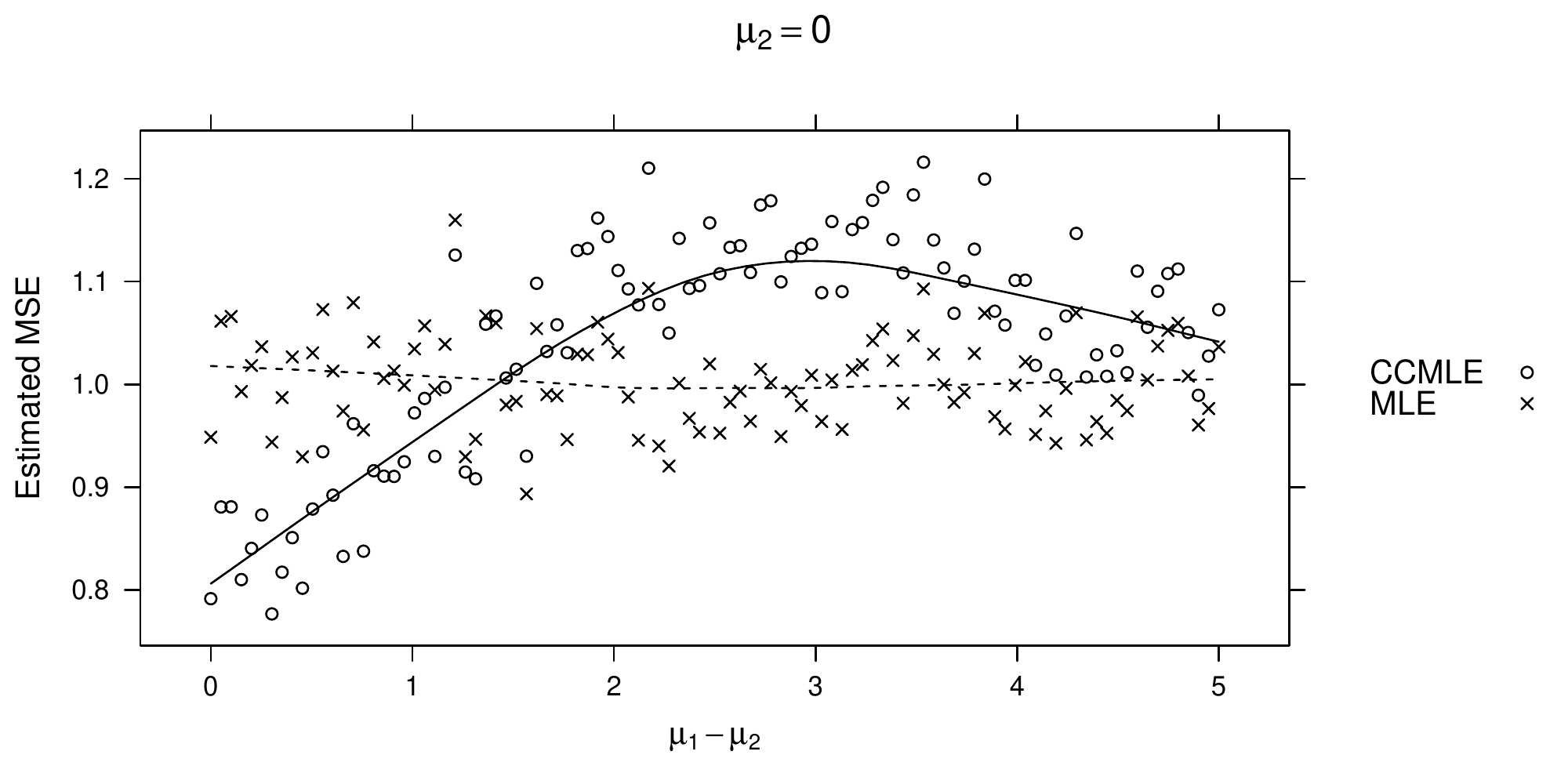}
\caption{Estimated MSE for $p=2$ simulation experiment.}
\label{fig:plot_p2}
\end{figure}

Now let us turn to the situation when $p=3$. In our experiment, we considered 3
possible values for $\mu_3:0, 2$ and 4. For each $\mu_3$, we varied $\mu_2$
and $\mu_1$ between $\mu_3$ and 5. The ouput for these experiments is shown in 
Figure \ref{fig:plot_p3_1}. 

\begin{figure}[ht]
\centering
\includegraphics[width=0.6\textwidth]{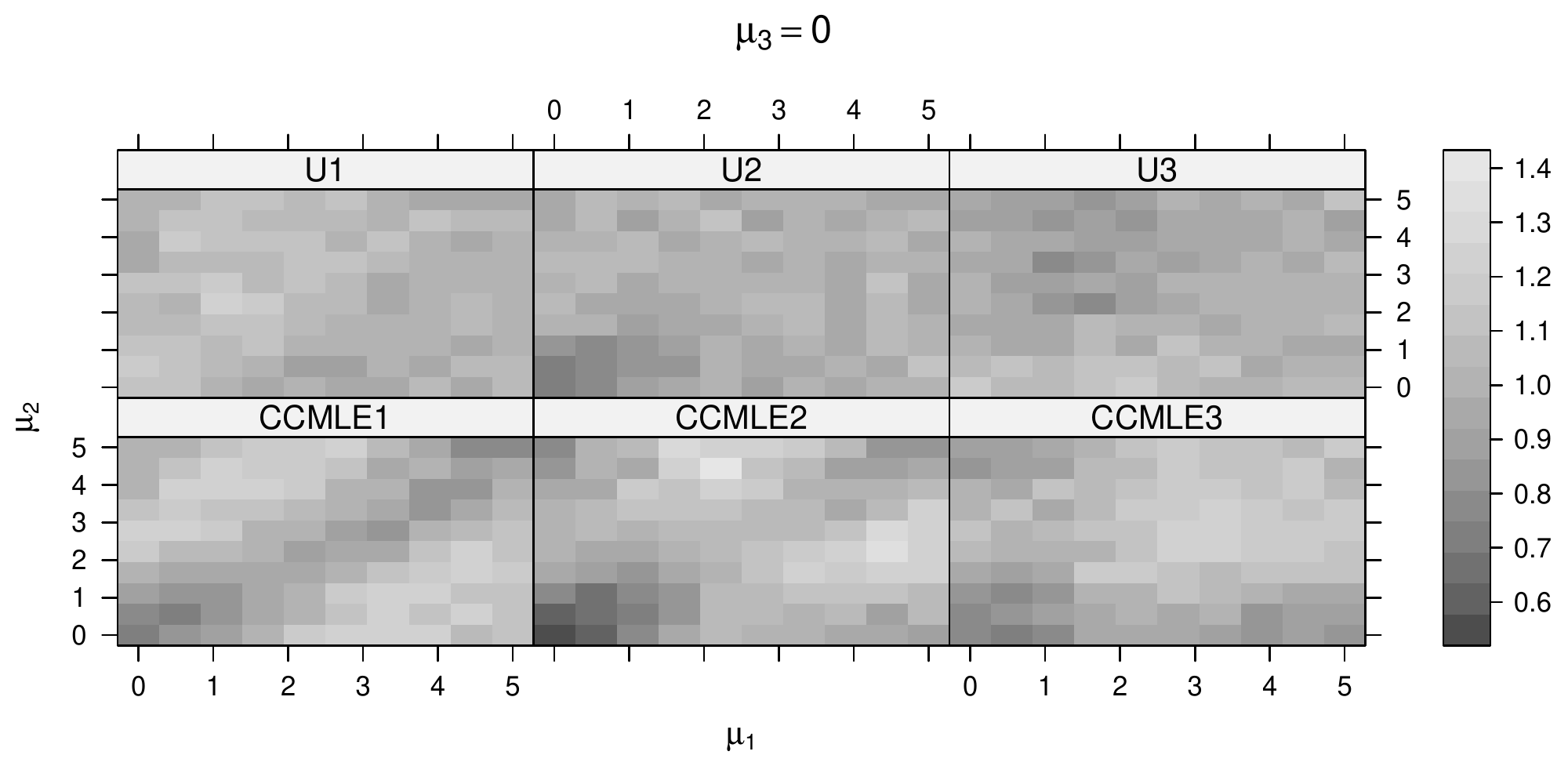}
\includegraphics[width=0.6\textwidth]{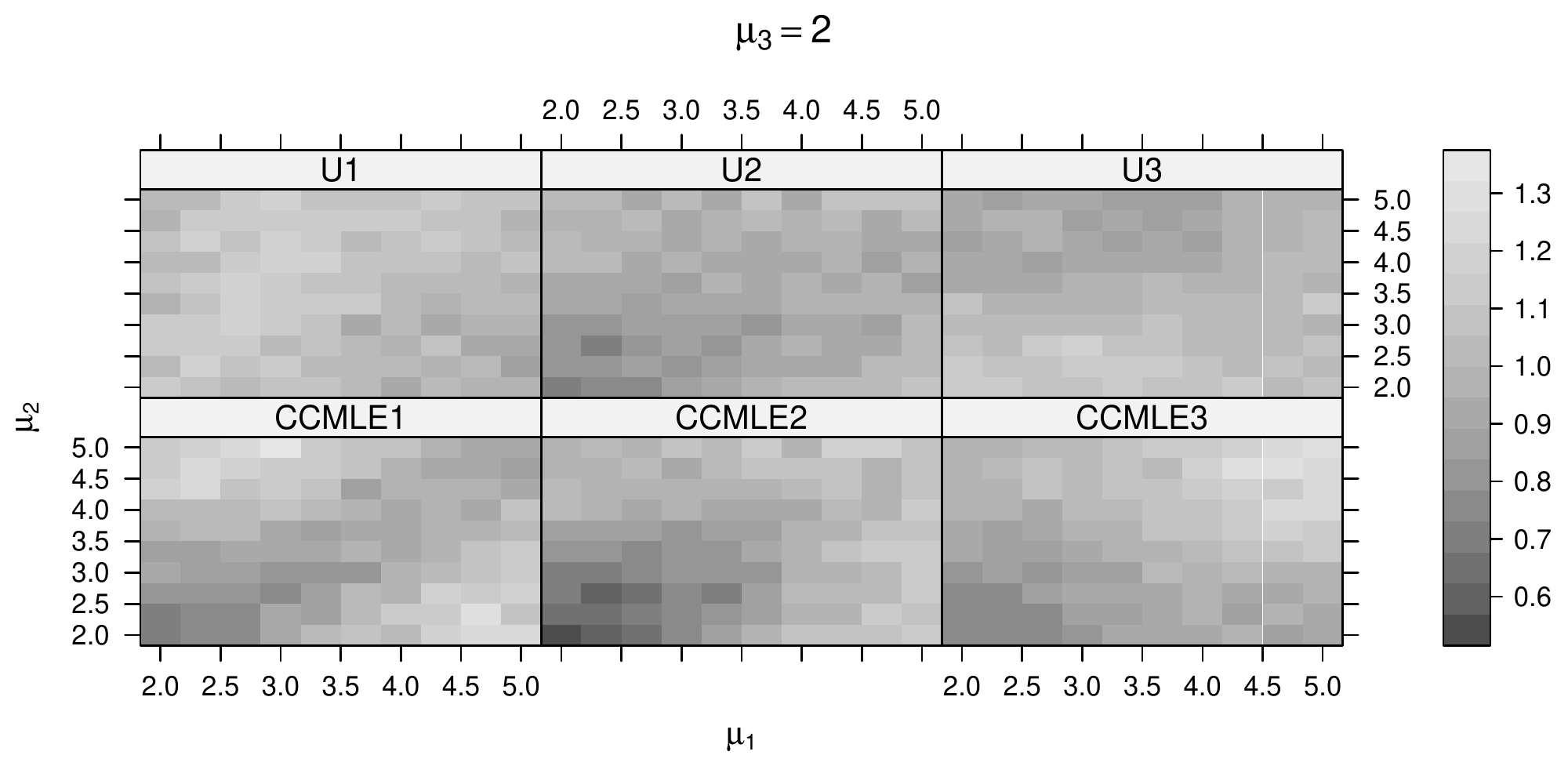}
\includegraphics[width=0.6\textwidth]{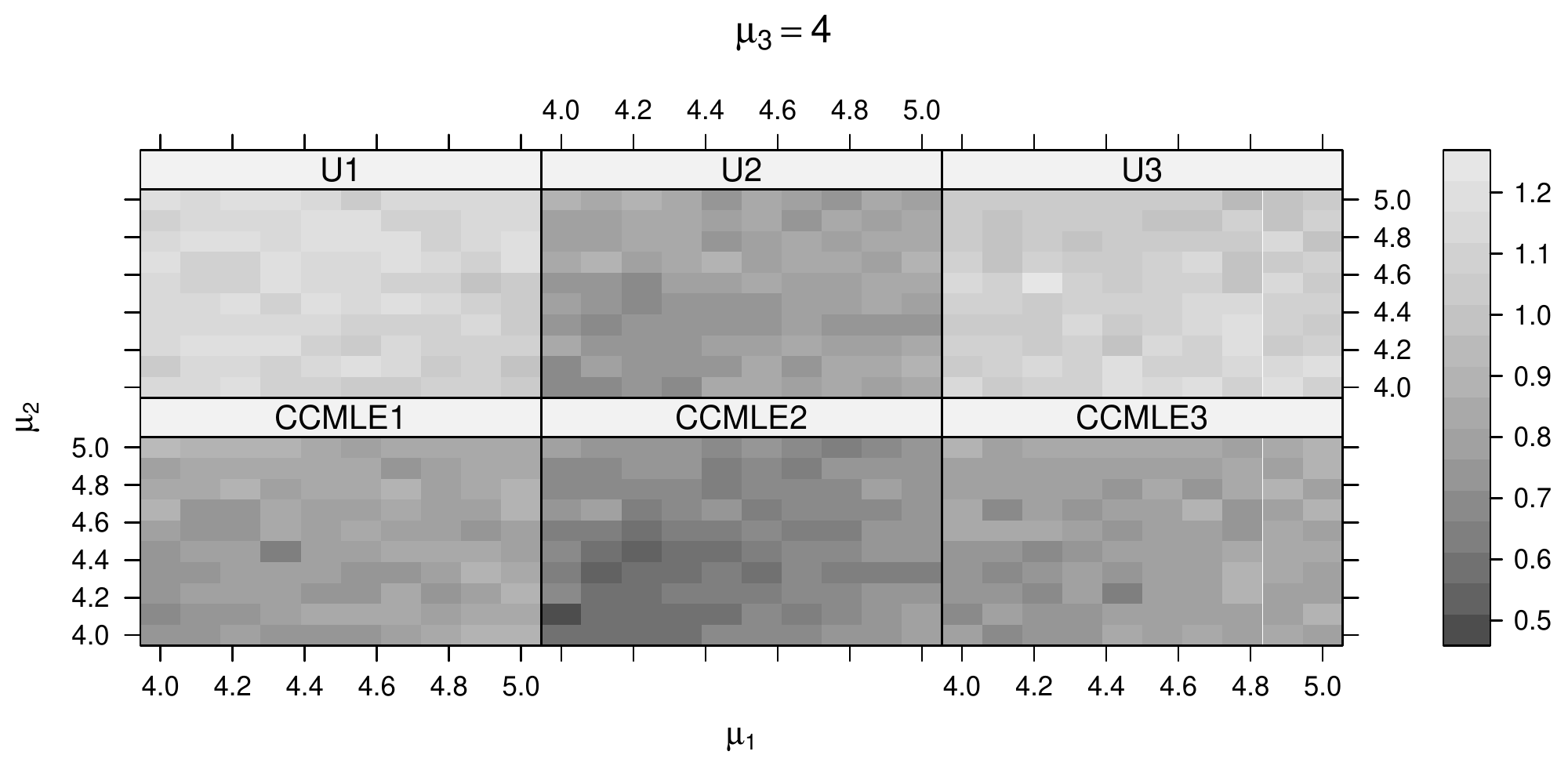}
\caption{Estimated MSE for $p=3$ simulation experiment}
\label{fig:plot_p3_1}
\end{figure}

Consider the three separate 2-by-3 lattice plots in Figure \ref{fig:plot_p3_1}. In
each, the level plots in the top row correspond to the ordinary MLE, and the
plots in the second row correspond to the CCMLE. The plots in the left-most
column correspond to the errors when estimating the mean of the population with
the \emph{maximum} sample mean, and the plots in the right-most column
correspond to the errors for the population with the \emph{minimum} sample mean.
Note that colors closer to black indicate a good estimator (low MSE) while
colors closer to white indicate a poorer performance (high MSE).

Let us focus first on the case $\mu_3=0$. Observe that there is a diagonal dark
line for the CCMLE in the left-most column. In the configurations on the
diagonal, the means of population 1 and 2 are close to each other, and hence the
CCMLE does well. In the off-diagonals, the populations are better separated and
hence performance goes down. Overall though, it does appear that the MLE and
CCMLE have a similar performance for this setting.
%a roughly equal number of dark regions.

In the second configuration for $p=3$, we had fixed $\mu_3$ to be 2, while
$\mu_1$ and $\mu_2$ varied from 2 to 5. The level plots for this configuration
dark regions for the CCMLE than the MLE, when compared to the case when
$\mu_3=0$. The reason is that the number of configurations where at least two of
the populations are close together has increased, resulting in overall better
performance from the CCMLE.

%\begin{figure}[H]
%\centering
%\includegraphics[width=0.65\textwidth]{CMLE2-plot_p3_2}
%\caption{Estimated MSE for $p=3$ simulation experiment, when $\mu_3$ is set to 2.}
%\label{fig:plot_p3_2}
%\end{figure}

Now focus on the displayed plots for $\mu_3=4$, in Figure \ref{fig:plot_p3_1}.
The difference between the CCMLE and the MLE is more pronounced when we consider
this final configuration, where $\mu_1$ and $\mu_2$ vary between 4 and 5. The
regions in the level plot for the CCMLE are consistently darker than those for
the MLE, due to the close proximity of sample means generated.

\subsection{Confidence Intervals based on the CCMLE}
\label{subsec:boot}

In this subsection, we use the stratified bootstrap to assess the confidence
intervals from the CCMLE procedure. Consider $p=3$, and the true means to be
$\mu_1$, $\mu_2$, $\mu_3$. We simulate a single sample of size 50 from each
population. We draw from $N(\mu_i, \sqrt{50})$ so that the sample mean still has
variance 1. Then we draw 9999 bootstrap samples from within each sample and
repeatedly compute the CCMLE. Each time, we are returned with an estimate of the
mean of the populations selected to be the maximum, middle and minimum. The bias
corrected intervals are then plotted for comparison with the traditional
intervals that do not incorporate the selection process. This procedure is
carried out for 4 different configurations of true means. The output can be seen
in Figure \ref{fig:boot_1}.

The output is most informative for the configuration where $\mu_1=10$, $\mu_2 =
9.5$ and $\mu_3 = 9$. In this case, the CCMLE point estimates are all equal,
since there is insufficient power to discriminate between them. The CCMLE intervals
are roughly the same, though it is worth pointing out the slight shrinkage, in
opposite directions, of the intervals for the maximum and the minimum. In the
situations where the mean of a group is far from the rest (for instance, in the
bars for ``Max'' in the plots on the right, and the bars for ``Min'' in the
bottom left), the CCMLE and the traditional approach provide comparable
intervals. In cases where means are not distinguishable, (for instance, in the
bars for ``Mid'' and ``Min'' in the top right), the CCMLE suggests it can provide 
shorter intervals. 

\begin{figure}[H]
\centering
\includegraphics[width=0.7\textwidth]{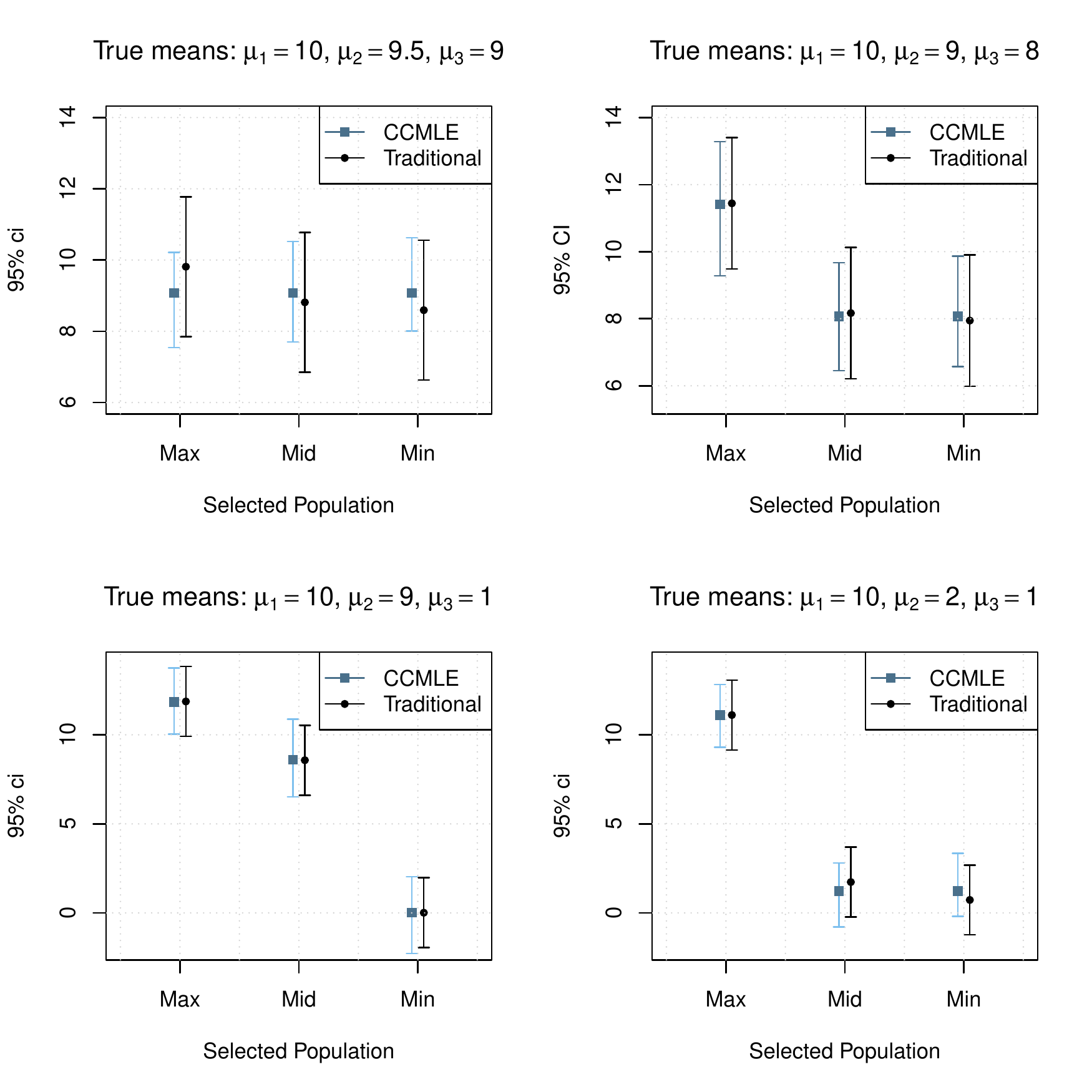}
\caption{Bootstrap confidence intervals using CCMLE procedure. See Section 
\ref{subsec:boot} for a detailed explanation.}
\label{fig:boot_1}
\end{figure}

\section{Discussion}

In this paper, we introduced a new estimator for the means of selected
populations and have started to unveil some of its properties. Although
simulations experiments suggest the estimator is not admissible (see Figures 2
and 3), it performs well, particularly when the population means are
close together. Furthermore, the proposed CCMLE provides simultaneous inference
on selected populations, without a need to pre-specify the number of populations
selected. Another advantage of the procedure is that because it is frequentist
in paradigm, there is no need for prior specification.

Although the focus in this paper has been on selection via ranking of sample
means, conceptually, this approach allows for any other selection criterion to
be used. For instance, if populations were to be selected based on the absolute
values of the sample means, the primary modification would be to the probability
in the denominator of equation (\ref{eq:cond_lik}). Finally, as presented, it is
straightforward to obtain bootstrap confidence intervals for this estimator, as
demonstrated in Section \ref{subsec:boot}.

Future work include the study of the asymptotic properties of this
estimator, and computationally efficient methods to approximate the probability
$P_{\bm{\mu}}(X_1>\ldots>X_p)$ in a general framework.

\bibliographystyle{plain}
\bibliography{ref}
\end{document}